\documentclass{article}
\usepackage{fullpage}

\usepackage{cite}
\usepackage{algorithmic}
\usepackage{graphicx}
\usepackage{textcomp}
\usepackage{xcolor}
\def\BibTeX{{\rm B\kern-.05em{\sc i\kern-.025em b}\kern-.08em
    T\kern-.1667em\lower.7ex\hbox{E}\kern-.125emX}}

\newcommand\blfootnote[1]{%
	\begingroup
	\renewcommand\thefootnote{}\footnote{#1}%
	\addtocounter{footnote}{-1}%
	\endgroup
}

\usepackage{hyperref}
\hypersetup{pdftex,colorlinks=true,allcolors=blue} 
\usepackage{hypcap} 

\usepackage{amsmath, amsthm, amssymb}
\usepackage{verbatim}
\usepackage{graphicx}
\usepackage{array}
\usepackage{mathrsfs}
\usepackage{comment}
\usepackage{enumitem} 
\usepackage{lmodern}

\newtheorem{theorem}{Theorem}
\newtheorem{lemma}{Lemma}

\newtheorem{definition}{Definition}
\newcommand{\R}{\mathbb{R}}
\newcommand{\E}{\mathbb{E}}

\def\sA{{\mathsf A}}

\def\sS{{\mathsf S}}

\def\sW{{\mathsf W}}
\def\sX{{\mathsf X}}
\def\sY{{\mathsf Y}}

\def\rd{{\rm d}}
\def\PP{{\mathbb P}}

\def\deq{\triangleq}

\begin{document}

\title{
Optimal Feedback Communication with Information Maximization and  Distortion Minimization
}
\date{}

\author{Aolin Xu}
\maketitle
\blfootnote{xuaolin@gmail.com}

\begin{abstract}
We study the problem of optimally sending a real-valued source through multiple uses of a channel with feedback. First, we state a set of conditions that are sufficient for an encoder to achieve maximal mutual information between the source and all the channel outputs. This set of conditions are also necessary when the channel is input-identifiable, a condition widely satisfied by common channel models.
More notably, we further study the information maximization-distortion minimization problem, where the mutual information between the source and all channel outputs still needs to be maximized, while at each step, the MMSE of estimating the source from the channel outputs so far also needs to be minimized. We derive a solution to this problem for discrete channels with certain symmetries, e.g. $k$-ary symmetric or $k$-ary erasure channels. We show that for such channels the famous posterior matching scheme, while not necessary for information maximization alone, is sufficient and essentially necessary for achieving both information maximization and distortion minimization.
This work also provides a new perspective of regularizing distortion-minimizing feedback communication through information maximization, which enables us to find the optimal solution that otherwise would be intractable.
\end{abstract}

\section{Introduction}
In a landmark work by Shayevitz and Feder \cite{pm2011}, the posterior matching scheme was proposed as 
a fundamental principle for feedback communication and as 
a simple sequential scheme 
for 
achieving reliable communication at any rate below the channel capacity.
The well-known Schalkwijk-Kailath scheme for the
AWGN channel \cite{SchalkwijkKailath,Schalkwijk66} and the Horstein
scheme for the BSC \cite{Horstein63}  can be derived as special cases of posterior matching.
If the goal for communication is merely achieving maximal mutual information between the source and the channel outputs, the posterior matching scheme is sufficient, but is not necessary. 
This work is partly motivated by what posterior matching can achieve beyond maximal mutual information and reliable communication at capacity.

We propose the problem of optimal feedback communication with information maximization-distortion minimization, where the mutual information between the source and all channel outputs still needs to be maximized, while at each step, the MMSE of estimating the source from the channel outputs so far also needs to be minimized. We derive a solution to this problem for discrete channels with certain symmetries, e.g. $k$-ary symmetric or $k$-ary erasure channels. We show that for such channels the posterior matching scheme is sufficient and essentially necessary for achieving both information maximization and distortion minimization.

From an operational perspective, information maximization may be viewed as a form of regularization for distortion minimization. Without this regularization, it is generally intractable to solve the distortion minimization problem by itself, even when the channel is noiseless. With this regularization, however, our results show that a clean solution to distortion minimization can be obtained under certain symmetry conditions of the channel.


In the literature, there are studies of joint source-channel coding with feedback other than \cite{SchalkwijkKailath, Schalkwijk66, Horstein63, pm2011}, including works on fixed blocklength coding \cite{Gastpar_jscc_fb03}, variable length coding \cite{iter_fb02}, minimizing accumulated distortion \cite{dynamic_jscc_fb13}, fundamental limits of delay-distortion trade-off \cite{vic_pol_ver_jscc17}, and control-theoretic viewpoint \cite{control_comm_fb12}, but none of them studies the problem considered here.
This work potentially bridges the information-theoretic feedback communication and control-oriented communication. Our findings justify the use of posterior matching in real-time applications where immediate state estimation is critical, such as networked control systems \cite{anytime_capacity06} and human-computer interfaces \cite{coleman_hci}.
Moreover, it provides a new perspective of regularizing distortion-minimizing feedback communication through information maximization, which enables us to find the optimal solution.






\section{Problem statements}
Let $W$ be a continuous random variable with a distribution $P_W$ over $\sW\subset\R$ that can be described by a continuous cumulative distribution function (CDF) $F_W$ or a probability density function (PDF) $p_W$.
The goal is to design an encoder that can sequentially send $W$ through a channel with input alphabet $\sX$, output alphabet $\sY$ and probability transition law $P_{Y|X}$, where the channel output after each channel use is fed back to the sender, such that after $n$ channel uses, the mutual information between $W$ and all channel outputs is maximized, while after each channel use, the MMSE of estimating $W$ from the channel outputs so far is minimized. 

\begin{figure}[t]
    \centering
\includegraphics[width=0.25\textwidth]{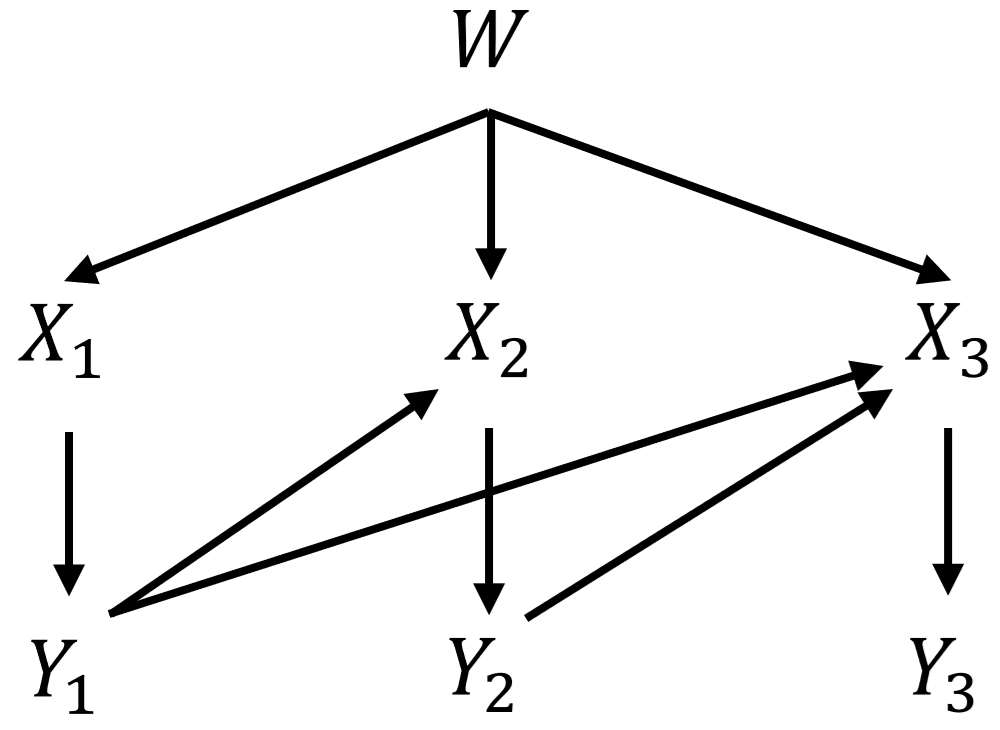}
    \caption{Causal relationships of the random variables in feedback communication.}
    \label{fig:feedback}
\end{figure}

Formally, the encoder can be described by a sequence of encoding functions $(\varphi_t)_{t=1}^n$, with $\varphi_t: \sW\times\sY^{t-1}\rightarrow\sX$.
At the $t$th step, the encoder maps the source $W$ together with the previous channel outputs $Y^{t-1}\deq(Y_1, \ldots, Y_{t-1})$ to a new channel input as $X_t=\varphi_t(W,Y^{t-1})$. After sending $X_t$ through the channel, the new channel output $Y_t$, generated based on $X_t$ through the probability transition law $P_{Y|X}$, is fed back to the encoder. 
The Bayesian network of the random variables in this feedback communication setup is drawn according to their causal relationships and is shown in Fig.~\ref{fig:feedback}.
We aim to solve 
two problems.
\begin{itemize}[leftmargin=*]
    \item Information maximization
    
Given $P_W$, $P_{Y|X}$ and $n$, the information maximization problem can be stated as
\begin{align}
    \max_{\varphi_t:\,\sW\times\sY^{t-1} \rightarrow \sX , \, t=1,\ldots,n} I(W; Y^n) . \label{eq:opt_I}
\end{align}
Denote the set of the sequences of encoding functions achieving the maximum of \eqref{eq:opt_I} as $\mathsf\Phi$. 

    \item Information maximization-distortion minimization

At the $t$th step, given $Y^{t-1}$, any encoding function $\varphi_t:\,\sW\times\sY^{t-1} \rightarrow \sX$ induces a $Y^{t-1}$-dependent encoding function 
\begin{align}
\phi_{t,Y^{t-1}}:\,\sW \rightarrow \sX, \, W\mapsto \varphi_t(W, Y^{t-1}) . \label{eq:Y-depend_phi}
\end{align}
For each $t=1,\ldots,n$, given $P_W$, $P_{Y|X}$, $Y^{t-1}$ and $\mathsf\Phi$, the information maximization-distortion minimization problem can be stated as
\begin{align}
    \min_{\substack{\phi_{t,Y^{t-1}}:\,\sW \rightarrow \sX , \, W\mapsto \varphi_t(W, Y^{t-1})\\ \text{s.t. } (\varphi_1, \ldots, \varphi_n)\in\mathsf\Phi}} \text{mmse}(W|Y^{t}) \label{eq:opt_mmse}
\end{align}
where 
\begin{align}
\text{mmse}(W|Y^{t}) \deq \min_{\psi_t:\,\sY^t\rightarrow\sW}\E[(W-\psi_t(Y^{t}))^2]    \label{eq:mmse_W|Yt_def}
\end{align}
for $t = 1, \ldots, n$.
The function $\psi_t$ in \eqref{eq:mmse_W|Yt_def} can be viewed as the optimal decoder at the $t$th step that minimizes the mean squared error of estimating $W$ from $Y^t$. It does not affect any random variables under consideration. 
\end{itemize}
For the information maximization problem in \eqref{eq:opt_I}, we seek a sequence of encoding functions $(\varphi_t)_{t=1}^n$ that can maximize the mutual information between $W$ and the overall channel outputs $Y^n$ after $n$ steps; while for the information maximization-distortion minimization problem, we maintain the maximal mutual information $I(W;Y^n)$, and additionally at each step, given $Y^{t-1}$, we seek a $Y^{t-1}$-dependent encoder $\phi_{t, Y^{t-1}}$ to greedily minimize ${\rm mmse}(W|Y^t)$ for that step.

\section{Optimal encoder for information maximization}
Throughout, we assume the channel $P_{Y|X}$ has information capacity $C$ achieved by an input distribution $P_X^*$. Following the notation in \cite{polyanskiy_Wu_book}, this means
\begin{align}
    \max_{P_X } I(P_X, P_{Y|X}) = I(P_X^*, P_{Y|X}) = C . \label{eq:info_C_def}
\end{align}
For channels with finite information capacity only when the input distribution is constrained, e.g. AWGN channel with constrained second order moment of the input, we assume such a constraint exists and is applied to the maximization of $P_X$ in \eqref{eq:info_C_def}.
In the following two subsections, we first state a set of conditions that are sufficient for the encoding functions to achieve the maximal $I(W;Y^n)$ as $nC$, and then state that under certain assumptions on the channel, these conditions are also necessary. The proofs of the results in this section can be found in \cite{info_max_feedback}.
\subsection{Sufficiency}
\begin{theorem}\label{th:suff_info_max}
If the sequence of encoding functions $(\varphi_t)_{t=1}^n$ satisfy the following three conditions for all $t=1,\ldots,n$: 
\begin{enumerate}[leftmargin=*]
    \item 
    $\varphi_t$ is a deterministic function, meaning that $X_t$ is uniquely determined by $(W, Y^{t-1})$;
    \item 
    the marginal distribution of $X_t$ is a capacity-achieving distribution $P_X^*$ of the channel;
    \item 
    $X_t$ is statistically independent of $Y^{t-1}$;
\end{enumerate}
then $I(W;Y^n)$ achieves the maximal value $nC$.
\end{theorem}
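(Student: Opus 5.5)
We expand $I(W;Y^n)$ with the chain rule and show that each step contributes exactly $C$. Write
\begin{align}
I(W;Y^n)=\sum_{t=1}^n I(W;Y_t\,|\,Y^{t-1}) . \nonumber
\end{align}
The plan is to prove that every summand equals $C$. As a side remark, the same decomposition gives the matching upper bound $nC$, so the value $nC$ is indeed maximal. The reason is that $I(W;Y_t|Y^{t-1})\le I(W,Y^{t-1};Y_t)= I(X_t;Y_t)\le C$, using the Markov chain $(W,Y^{t-1})\to X_t\to Y_t$ shown in Fig.~\ref{fig:feedback}.

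For the lower bound, fix $t$. Since $Y^{t-1}$ is fixed by the conditioning, the pair $(W,Y^{t-1})$ carries the same information about $Y_t$ as $W$ does, so
\begin{align}
I(W;Y_t|Y^{t-1}) = I(W,Y^{t-1};Y_t) - I(Y^{t-1};Y_t). \nonumber
\end{align}
The first term is handled by condition~1. Because $X_t=\varphi_t(W,Y^{t-1})$ is deterministic and $Y_t$ depends on $(W,Y^{t-1},X_t)$ only through $X_t$, we get
\begin{align}
I(W,Y^{t-1};Y_t)=I(W,Y^{t-1},X_t;Y_t)=I(X_t;Y_t). \nonumber
\end{align}
Condition~2 says $X_t\sim P_X^*$, so $I(X_t;Y_t)=I(P_X^*,P_{Y|X})=C$. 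If the capacity definition carries an input constraint, it is satisfied automatically because the marginal of $X_t$ is $P_X^*$.

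It remains to show that $I(Y^{t-1};Y_t)=0$, and this is where condition~3 enters. The Markov chain $Y^{t-1}\to X_t\to Y_t$ holds, since $Y_t$ is generated from $X_t$ through the memoryless channel with fresh noise, independent of the past given $X_t$. Therefore
\begin{align}
P_{Y_t|Y^{t-1}}=\int P_{Y|X}(\cdot|x)\,P_{X_t|Y^{t-1}}(\rd x). \nonumber
\end{align}
Condition~3 gives $P_{X_t|Y^{t-1}}=P_{X_t}$, so $P_{Y_t|Y^{t-1}}=P_{Y_t}$. Hence $Y_t$ is independent of $Y^{t-1}$ and $I(Y^{t-1};Y_t)=0$.

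Combining the two terms gives $I(W;Y_t|Y^{t-1})=C$ for every $t$. Summing over $t=1,\ldots,n$ yields $I(W;Y^n)=nC$.

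The main obstacle is not conceptual but a matter of rigor for general alphabets. With continuous $W$ and possibly continuous $X,Y$ (for example the AWGN channel), the chain-rule manipulations and the identity $I(W,Y^{t-1};Y_t)=I(X_t;Y_t)$ must be justified for general mutual information, not just entropies. I would handle this by working only with mutual-information identities that hold for arbitrary random variables: the chain rule for mutual information, and the data-processing equality when one variable is a function of others. I would avoid differential entropies entirely, and use finiteness of $C$ to rule out $\infty-\infty$ issues in the subtraction step.
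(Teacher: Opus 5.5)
Your proof is correct and follows essentially the same route as the paper's argument, which the paper defers to its reference. You use the chain rule $I(W;Y_t|Y^{t-1}) = I(W,Y^{t-1};Y_t) - I(Y^{t-1};Y_t)$ together with $I(W,Y^{t-1};Y_t)\le I(X_t;Y_t)\le C$, and conditions 1, 2 and 3 make the three slack terms vanish; these are exactly the equality cases that Lemmas~\ref{lm:ch_non_degrade} and~\ref{lm:ch_identif} later invert for the necessity direction.
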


\subsection{Necessity}
First, we introduce two conditions of a channel, which are widely satisfied by commonly used channel models.
\begin{definition}
A channel $P_{Y|X}$ is injective if the distributions $(P_{Y|X=x}, x\in\sX)$ are all different.
\end{definition}
\begin{definition}
A channel $P_{Y|X}$ is input-identifiable if the distributions $(P_{Y|X=x}, x\in\sX)$ are linearly independent.
\end{definition}
All practically meaningful channel models are injective. Being injective is also a necessary condition for a channel to be input-identifiable.
When a channel is input-identifiable, different input distributions map to different output distributions, and an input distribution $P_X$ can be identified from the output distribution $P_Y$.
Commonly used discrete channel models, including binary symmetric channel, $k$-ary symmetric channel, binary erasure channel and $k$-ary erasure channel are all input-identifiable. The concept of input-identifiable channels can be extended to include AWGN channel, additive exponential noise channel and Poisson channel.
The following two lemmas are stated under the two channel conditions respectively.
\begin{lemma}\label{lm:ch_non_degrade}
For random variables $X$, $Y$ and $Z$ forming a Markov chain $Z-X-Y$, under the assumption that $P_{Y|X}$ is injective, if $I(Z;Y) = I(X;Y)$, then $X$ needs to be a function of $Z$.
\end{lemma}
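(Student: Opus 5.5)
The plan is to study the equality case of the data processing inequality. Since $Z-X-Y$ is a Markov chain, the chain rule gives
\begin{align*}
I(Z,X;Y) = I(X;Y) + I(Z;Y|X) = I(Z;Y) + I(X;Y|Z).
\end{align*}
The Markov property makes $I(Z;Y|X)=0$. So the hypothesis $I(Z;Y)=I(X;Y)$ is equivalent to $I(X;Y|Z)=0$, assuming $I(X;Y)<\infty$ so the subtraction is legitimate. In other words, conditionally on $Z=z$, the variables $X$ and $Y$ are independent for $P_Z$-almost every $z$.

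The next step is to turn this conditional independence into a statement about channel rows. Fix such a $z$. Because $P_{Y|X,Z}=P_{Y|X}$ by the Markov chain, conditional independence means that for $P_{X|Z=z}$-almost every $x$,
\begin{align*}
P_{Y|X=x} = P_{Y|Z=z} = \int P_{Y|X=x'}\, \rd P_{X|Z=z}(x').
\end{align*}
Hence all $x$ in the support of $P_{X|Z=z}$, up to a null set, induce the same output distribution $P_{Y|X=x}$. Injectivity says distinct inputs give distinct output distributions. Therefore $P_{X|Z=z}$ puts all its mass on a single point $x=g(z)$, and $X=g(Z)$ almost surely.

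The main obstacle is measure-theoretic bookkeeping in the general (e.g.\ continuous-alphabet) case. Three things need care: the conditional independence holds only for almost every $z$ and almost every $x$; the map $g$ must be measurable; and the step ``all points in the support share one row, hence the support is a singleton'' must be justified. For the last point I would argue that if $P_{X|Z=z}$ were not a point mass, there would be two disjoint positive-measure sets of inputs with identical rows. That contradicts injectivity, since injectivity applies pointwise and so any two distinct $x,x'$ differ.

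For measurability, I would define $g(z)$ as the essential value of $X$ given $Z=z$, for instance the conditional mean, which is measurable and coincides with the point mass. In the discrete case, which is what the paper's examples need, all of these issues disappear: the argument is just the equality condition of data processing plus injectivity.
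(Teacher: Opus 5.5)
Your proof is correct. The paper does not prove this lemma itself; it defers the proof to the cited companion work. Your route is the standard one you would expect there:

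\begin{itemize}
\item equality in the data-processing inequality forces $I(X;Y|Z)=0$, i.e.\ $X - Z - Y$;
\item so for $P_Z$-a.e.\ $z$, $P_{X|Z=z}$-almost every input has row $P_{Y|Z=z}$;
\item injectivity then collapses $P_{X|Z=z}$ to a point mass.
\end{itemize}

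The point-mass step is cleanest stated directly: the set $\{x : P_{Y|X=x}=P_{Y|Z=z}\}$ has full $P_{X|Z=z}$-measure and, by injectivity, contains at most one point.

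Your assumption $I(X;Y)<\infty$ is not just bookkeeping; the statement fails without it. Take the noiseless real channel $Y=X$ with $X$ uniform on $[0,1]$, and set $Z=(B,\,BX+(1-B)N)$, where $B$ is Bernoulli$(1/2)$ and $N$ is independent noise. Then $I(Z;Y)=I(X;Y)=\infty$, yet $X$ is not a function of $Z$. The assumption does hold in the paper's setting, since $I(X;Y)\le C<\infty$ is assumed throughout.
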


\begin{lemma}\label{lm:ch_identif}
For random variables $X$, $Y$ and $Z$ forming a Markov chain $Z-X-Y$, under the assumption that $P_{Y|X}$ is input-identifiable, if $Y$ is statistically independent of $Z$, then $X$ needs to be statistically independent of $Z$.
\end{lemma}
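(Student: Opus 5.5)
The plan is to show that the conditional law of $X$ given $Z=z$ does not depend on $z$, using linear independence of the channel rows. Throughout, take the discrete setting, where $\sX$ is finite or countable and input-identifiability means linear independence of the vectors $(P_{Y|X=x})_{x\in\sX}$. For a general alphabet, the same argument goes through with the identifiability property stated as: the map $P_X\mapsto P_Y=\int P_{Y|X=x}\,\rd P_X(x)$ is injective on probability measures, and more generally on finite signed measures.

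\textbf{Step 1 (Markov chain).} Since $Z-X-Y$, for every $z$ in the support of $P_Z$ we have
\begin{align}
P_{Y|Z=z} = \sum_{x\in\sX} P_{X|Z=z}(x)\, P_{Y|X=x} .
\end{align}
The output law seen under $Z=z$ is therefore the image of $P_{X|Z=z}$ under the linear channel map. Averaging over $z$ gives $P_Y=\sum_x P_X(x)P_{Y|X=x}$.

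\textbf{Step 2 (independence of $Y$ and $Z$).} Independence means $P_{Y|Z=z}=P_Y$ for $P_Z$-almost every $z$. Subtracting the two expansions from Step 1 gives
\begin{align}
\sum_{x\in\sX}\bigl(P_{X|Z=z}(x)-P_X(x)\bigr)P_{Y|X=x}=0 .
\end{align}
Linear independence of $(P_{Y|X=x},x\in\sX)$ then forces every coefficient to vanish. Hence $P_{X|Z=z}=P_X$ for almost every $z$, which is exactly the statement that $X$ is independent of $Z$.

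\textbf{Main obstacle.} The delicate point is making "linearly independent" meaningful when $\sX$ is infinite. An infinite combination $\sum_x c_x P_{Y|X=x}$ with $\sum_x|c_x|<\infty$ must be covered, since the coefficients $P_{X|Z=z}-P_X$ are such absolutely summable sequences. The same issue arises for the extensions the paper mentions (AWGN, Poisson), where one needs injectivity of the channel map on signed measures; for AWGN this is deconvolution via the nonvanishing Gaussian characteristic function.

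I would therefore state input-identifiability precisely in this form: the only finite signed measure $\mu$ with $\int P_{Y|X=x}\,\rd\mu(x)=0$ is $\mu=0$. Step 2 then applies verbatim with $\mu=P_{X|Z=z}-P_X$. A measurability caveat remains: the argument needs regular conditional distributions, which exist on standard Borel spaces. The conclusion holds for almost every $z$, and that suffices for independence.
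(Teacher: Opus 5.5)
Your proof is correct and matches the paper's intended argument. The paper defers the proof to \cite{info_max_feedback}, but its remark that under input-identifiability an input distribution is determined by its output distribution is exactly what you apply to $P_{X|Z=z}$ and $P_X$, which both induce $P_{Y|Z=z}=P_Y$. Your insistence on injectivity over (signed) measures for infinite $\sX$ is a genuinely needed strengthening: with only finite linear independence the claim can fail, e.g.\ rows $P_{Y|X=x}=\delta_x$ for $x\ge 1$ together with a geometric row $P_{Y|X=0}$ on $\{1,2,\ldots\}$.
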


\begin{theorem}\label{th:nece_info_max}
Under the assumption that the channel $P_{Y|X}$ is input-identifiable, the three conditions listed in Theorem~\ref{th:suff_info_max} are also necessary for the sequence of encoding functions $(\varphi_t)_{t=1}^n$ to satisfy for all $t=1,\ldots,n$ in order to have $I(W;Y^n) = nC$.
\end{theorem}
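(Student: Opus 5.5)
We use the chain-rule decomposition of $I(W;Y^n)$ that underlies Theorem~\ref{th:suff_info_max} and note that equality $I(W;Y^n)=nC$ forces equality in every inequality of that chain. Write
\begin{align}
I(W;Y^n) = \sum_{t=1}^n I(W;Y_t\,|\,Y^{t-1}) .
\end{align}
For each $t$, the Bayesian network of Fig.~\ref{fig:feedback} gives the Markov chain $(W,Y^{t-1})-X_t-Y_t$. Hence
\begin{align}
I(W;Y_t|Y^{t-1}) \le I(W,Y^{t-1};Y_t) = I(X_t;Y_t) - \bigl[I(X_t;Y_t)-I(W,Y^{t-1};Y_t)\bigr],
\end{align}
with $I(W,Y^{t-1};Y_t)\le I(X_t;Y_t)\le C$. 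The first inequality is $I(W,Y^{t-1};Y_t)=I(Y^{t-1};Y_t)+I(W;Y_t|Y^{t-1})$. So each summand is at most $C$, and $I(W;Y^n)=nC$ forces, for every $t$, the three equalities
\begin{align}
I(Y^{t-1};Y_t)=0,\qquad I(W,Y^{t-1};Y_t)=I(X_t;Y_t),\qquad I(X_t;Y_t)=C.
\end{align}

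Each equality then yields one condition of Theorem~\ref{th:suff_info_max}.

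\textbf{Condition 1 (determinism).} Apply Lemma~\ref{lm:ch_non_degrade} with $Z=(W,Y^{t-1})$, $X=X_t$, $Y=Y_t$. Input-identifiability implies injectivity, so the lemma applies, and the second equality shows that $X_t$ is a function of $(W,Y^{t-1})$. If one allows $\varphi_t$ to use private randomness, this step is exactly what rules it out. We should check that the Markov chain still holds with such randomness; it does, since the randomness enters only through $X_t$.

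\textbf{Condition 3 (independence).} Apply Lemma~\ref{lm:ch_identif} with $Z=Y^{t-1}$. We have $Y^{t-1}-X_t-Y_t$, because $Y_t$ depends on the past only through $X_t$. Together with $I(Y^{t-1};Y_t)=0$, the lemma gives that $X_t$ is independent of $Y^{t-1}$.

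\textbf{Condition 2 (capacity-achieving marginal).} The third equality says $P_{X_t}$ achieves capacity. If uniqueness of $P_X^*$ is intended, input-identifiability makes $P_X\mapsto P_Y$ injective. The capacity-achieving output distribution is always unique by strict concavity of $I$ in $P_Y$ for a fixed channel (the standard saddle-point argument). Hence the capacity-achieving input $P_X^*$ is unique as well, and $P_{X_t}=P_X^*$.

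The main obstacle is the bookkeeping that each of the three equalities really holds individually rather than only in sum. This is immediate once every summand is shown to be bounded by $C$ through a chain of inequalities each of which is individually $\le$. A second subtlety is the uniqueness of $P_X^*$ in the statement of condition 2, which is settled by the output-uniqueness argument above.
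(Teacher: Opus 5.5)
Your proposal is correct and follows the route the paper intends. The paper defers its proof to \cite{info_max_feedback}, but it states Lemmas~\ref{lm:ch_non_degrade} and \ref{lm:ch_identif} for exactly this use: bound each term by $I(W;Y_t|Y^{t-1})\le I(W,Y^{t-1};Y_t)\le I(X_t;Y_t)\le C$, force equality term by term, then read off determinism from Lemma~\ref{lm:ch_non_degrade}, independence from Lemma~\ref{lm:ch_identif}, and the capacity-achieving marginal from the last equality.

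Your uniqueness detour is not needed, since the condition only asks for \emph{a} capacity-achieving distribution. Where you do use it, the right justification is strict concavity of $H(Y)$ in $P_Y$ together with linearity of $H(Y|X)$ in $P_X$, not strict concavity of $I$ in $P_Y$.
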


\section{Optimal encoder for information maximization-distortion minimization}


Using basic properties of MMSE and conditional expectation, it can be shown that
\begin{align}
\text{   mmse}(W|Y^t) =
    \E[W^2]- \E[\E[W|Y^t]^2] .
\end{align}
As $\E[W^2]$ is determined by $P_W$, whenever $\E[W^2]<\infty$, the information maximation-distortion minimization problem in \eqref{eq:opt_mmse} is equivalent to
\begin{align}
    \max_{\substack{\phi_{t,Y^{t-1}}:\,\sW \rightarrow \sX, \, W\mapsto \varphi_t(W, Y^{t-1}) \\ \text{s.t. } (\varphi_1, \ldots, \varphi_n)\in\mathsf\Phi}} \E[\E[W|Y^{t}]^2] \label{eq:opt_EW|Y]^2}
\end{align}
To derive the solution to the optimization problem in \eqref{eq:opt_EW|Y]^2}, we first prove two useful properties about $\E[W|Y^t]$ when the encoder achieves information maximization.
Throughout this section, we assume $P_{Y|X}$ is a discrete memoryless channel (DMC) with input alphabet $\sX=\{1,\ldots,k\}$ and output alphabet $\sY=\{1,\ldots,l\}$. Also denote the capacity-achieving output distribution of the channel as $P_Y^* = P_{Y|X}\circ P_X^*$.
\begin{lemma}\label{lm:E[W|Y]}
Under the assumption that the channel $P_{Y|X}$ is input-identifiable, with a sequence of encoding functions belonging to $\mathsf\Phi$, we have for each $t=1,\ldots,n$,
\begin{align}
& \E[\E[W|Y^{t}]^2] 
=\sum_{y^{t-1}}P_{Y^{t-1}}(y^{t-1}) \E[\E[W|Y^{t-1}=y^{t-1}, Y_t]^2] \label{eq:lm_E[W|Yt]_1}
\end{align}
and
\begin{align}
& \E[W|Y^{t-1}=y^{t-1}, Y_t = y_t] = 
\sum_{x_t}\frac{P_{X}^*(x_t) P_{Y|X}(y_t|x_t)}{P_{Y}^*(y_t)} \E[W|Y^{t-1}=y^{t-1}, X_t=x_t] . \label{eq:lm_E[W|Yt]_2}
\end{align}
\end{lemma}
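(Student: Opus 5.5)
The first identity \eqref{eq:lm_E[W|Yt]_1} is just the tower property of expectation. I will write $\E[\E[W|Y^t]^2]$ as an outer expectation over $Y^{t-1}$ and an inner expectation over $Y_t$ given $Y^{t-1}=y^{t-1}$. The only subtlety is the meaning of the inner term $\E[\E[W|Y^{t-1}=y^{t-1},Y_t]^2]$. The right-hand side treats $Y_t$ as distributed according to its \emph{unconditional} law, whereas literally conditioning on $y^{t-1}$ gives $P_{Y_t|Y^{t-1}=y^{t-1}}$. So I need the fact that $Y_t$ is independent of $Y^{t-1}$ with marginal $P_Y^*$.

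That fact comes from Theorem~\ref{th:nece_info_max}. Since the encoder lies in $\mathsf\Phi$ and the channel is input-identifiable, the three conditions of Theorem~\ref{th:suff_info_max} hold. Thus $X_t$ is a deterministic function of $(W,Y^{t-1})$, $X_t\sim P_X^*$, and $X_t$ is independent of $Y^{t-1}$. The channel is memoryless and used without feedback into the channel law, so $Y_t$ depends on $(W,Y^{t-1},X_t)$ only through $X_t$. Hence
\[
P_{Y_t|Y^{t-1}=y^{t-1}}(y_t)=\sum_{x_t}P_{X_t|Y^{t-1}=y^{t-1}}(x_t)P_{Y|X}(y_t|x_t)=\sum_{x_t}P_X^*(x_t)P_{Y|X}(y_t|x_t)=P_Y^*(y_t).
\]
Substituting this into the tower decomposition gives \eqref{eq:lm_E[W|Yt]_1}, with the inner expectation taken over $Y_t\sim P_Y^*$.

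For \eqref{eq:lm_E[W|Yt]_2}, I condition on $X_t$ inside the event $\{Y^{t-1}=y^{t-1}\}$:
\[
\E[W|Y^{t-1}=y^{t-1},Y_t=y_t]=\sum_{x_t}P_{X_t|Y^{t-1}=y^{t-1},Y_t=y_t}(x_t)\,\E[W|Y^{t-1}=y^{t-1},X_t=x_t,Y_t=y_t].
\]
Two facts then finish the argument. First, the Markov chain $W-(X_t,Y^{t-1})-Y_t$, which is read off the Bayesian network in Fig.~\ref{fig:feedback}, lets me drop $Y_t$ from the inner conditional expectation. Second, by Bayes' rule the posterior weight is
\[
\frac{P_{X_t|Y^{t-1}=y^{t-1}}(x_t)\,P_{Y|X}(y_t|x_t)}{P_{Y_t|Y^{t-1}=y^{t-1}}(y_t)}=\frac{P_X^*(x_t)P_{Y|X}(y_t|x_t)}{P_Y^*(y_t)},
\]
using independence of $X_t$ from $Y^{t-1}$ together with the computation above.

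The main obstacle is not computational. It is making sure the conditional objects are well defined and that the Markov property $W-(X_t,Y^{t-1})-Y_t$ is stated cleanly. This requires $P_{Y^{t-1}}(y^{t-1})>0$ and $P_Y^*(y_t)>0$; terms with zero probability contribute nothing to the sums. It also requires care with conditioning a continuous $W$ on the discrete events $\{X_t=x_t\}$. Those events may have zero probability for some $x_t$, but then $P_X^*(x_t)=0$ and the term vanishes, since $P_{X_t|Y^{t-1}}=P_X^*$.
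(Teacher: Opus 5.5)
Your proposal is correct and follows the paper's proof essentially step for step. You invoke Theorem~\ref{th:nece_info_max} to get that $X_t$, and hence $Y_t$, is independent of $Y^{t-1}$ with $P_{X_t}=P_X^*$, and you use the tower property for the first identity; for the second you condition on $X_t$, apply Bayes' rule via the Markov chain $Y^{t-1}-X_t-Y_t$, and drop $Y_t$ via $W-(Y^{t-1},X_t)-Y_t$, with your remarks on zero-probability events being a harmless, slightly more careful addition.
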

\begin{proof}
Under the assumptions that the channel $P_{Y|X}$ is input-identifiable and the encoder belongs to $\mathsf\Phi$, Theorem~\ref{th:nece_info_max} implies that $X_t$ and hence $Y_t$ are statistically independent of $Y^{t-1}$.
Equation~\eqref{eq:lm_E[W|Yt]_1} follows from
\begin{align}
\E[\E[W|Y^{t}]^2]  
&= \sum_{y^{t-1}}P_{Y^{t-1}}(y^{t-1}) \E[\E[W|Y^{t-1}=y^{t-1}, Y_t]^2|Y^{t-1}=y^{t-1}] \nonumber\\
&= \sum_{y^{t-1}}P_{Y^{t-1}}(y^{t-1}) \E[\E[W|Y^{t-1}=y^{t-1}, Y_t]^2] \label{eq:lm_E[W|Yt]_3}
\end{align}
where \eqref{eq:lm_E[W|Yt]_3} follows from the necessary condition that $Y_t$ is statistically independent of $Y^{t-1}$, as a consequence of the above assumptions.

Equation~\eqref{eq:lm_E[W|Yt]_2} follows from
\begin{align}
\E[W|Y^{t-1}=y^{t-1}, Y_t=y_t]  
&= \sum_{x_t} P_{X_t|Y^{t-1}, Y_t}(x_t|y^{t-1}, y_t)
\E[W|Y^{t-1}=y^{t-1}, Y_t=y_t, X_t=x_t] \\
&= \sum_{x_t}\frac{P_{X}^*(x_t) P_{Y|X}(y_t|x_t)}{P_{Y}^*(y_t)} \E[W|Y^{t-1}=y^{t-1}, X_t=x_t] 
\end{align}
where the last step follows from
\begin{align}
P_{X_t|Y^{t-1}, Y_t}(x_t|y^{t-1}, y_t) 
&= \frac{P_{X_t|Y^{t-1}}(x_t|y^{t-1})P_{Y_t|X_t}(y_t|x_t)}{P_{Y_t|Y^{t-1}}(y_t|y^{t-1})}  \label{eq:lm_E[W|Yt]_5}\\
&= \frac{P_{X_t}(x_t) P_{Y_t|X_t}(y_t|x_t) }{P_{Y_t}(y_t)}  \label{eq:lm_E[W|Yt]_6} \\
&= \frac{P_{X}^*(x_t) P_{Y|X}(y_t|x_t)}{P_{Y}^*(y_t)}  \label{eq:lm_E[W|Yt]_7}
\end{align}
and from
\begin{align}
&\E[W|Y^{t-1}=y^{t-1}, Y_t=y_t, X_t=x_t] = 
\E[W|Y^{t-1}=y^{t-1}, X_t=x_t]  \label{eq:lm_E[W|Yt]_8}
\end{align}
where \eqref{eq:lm_E[W|Yt]_5} follows from the Markov chain $Y^{t-1}-X_t-Y_t$; 
\eqref{eq:lm_E[W|Yt]_6} follows from the necessary condition that $X_t$ and $Y_t$ are statistically independent of $Y^{t-1}$ under the assumptions;
\eqref{eq:lm_E[W|Yt]_7} follows from the fact that $P_{Y_t|X_t}=P_{Y|X}$ and the necessary condition $P_{X_t}=P_X^*$ hence $P_{Y_t}=P_Y^*$ under the assumptions; 
and 
\eqref{eq:lm_E[W|Yt]_8} follows from the Markov chain $W - (Y^{t-1},X_t) - Y_t$.
\end{proof}
Equation~\eqref{eq:lm_E[W|Yt]_1} in Lemma~\ref{lm:E[W|Y]} implies that under the input-identifiability assumption of the channel, the optimization problem in \eqref{eq:opt_EW|Y]^2} can be solved by maximizing $\E[\E[W|Y^{t-1}=y^{t-1}, Y_t]^2]$ for each given $Y^{t-1}=y^{t-1}$. In view of this, we derive a compact yet revealing form of $\E[\E[W|Y^{t-1}=y^{t-1}, Y_t]^2]$.

\begin{lemma}\label{lm:E[W|Y]^2}
Under the assumption that the channel $P_{Y|X}$ is input-identifiable, with a sequence of encoding functions belonging to $\mathsf\Phi$, we have for each $t=1,\ldots,n$,
\begin{align}
 \E[\E[W|Y^{t-1}=y^{t-1}, Y_t]^2] = {\boldsymbol b}_{y^{t-1}}^\top {\boldsymbol K} {\boldsymbol b}_{y^{t-1}} \label{eq:bKb_form}
\end{align}
where ${\boldsymbol b}_{y^{t-1}}$ is a length-$k$ column vector with the $i$th element
\begin{align}
{\boldsymbol b}_{y^{t-1}}(i) \deq P_{X}^*(i) \E[W|Y^{t-1}=y^{t-1}, X_t=i] \label{eq:b_def}
\end{align}
and ${\boldsymbol K}$ is a $k\times k$ matrix with the $(i,j)$th element
\begin{align}
{\boldsymbol K}(i,j) \deq \sum_{y_t=1}^l \frac{P_{Y|X}(y_t|i) P_{Y|X}(y_t|j)}{P_{Y}^*(y_t)}  . \label{eq:K_def}
\end{align}
\end{lemma}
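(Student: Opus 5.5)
The plan is to compute the expectation directly from Lemma~\ref{lm:E[W|Y]}, using that $Y_t$ is independent of $Y^{t-1}$ and has marginal $P_Y^*$. Fix $y^{t-1}$. Since the encoder belongs to $\mathsf\Phi$ and the channel is input-identifiable, Theorem~\ref{th:nece_info_max} gives that $X_t$ is independent of $Y^{t-1}$ with $P_{X_t}=P_X^*$. Hence $Y_t$ is independent of $Y^{t-1}$ with $P_{Y_t}=P_Y^*$. This already underlies \eqref{eq:lm_E[W|Yt]_1}. Therefore the expectation in \eqref{eq:bKb_form} is taken over $Y_t\sim P_Y^*$:
\begin{align}
\E[\E[W|Y^{t-1}=y^{t-1}, Y_t]^2] = \sum_{y_t=1}^l P_Y^*(y_t)\, \E[W|Y^{t-1}=y^{t-1}, Y_t=y_t]^2 . \nonumber
\end{align}

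Next I will substitute \eqref{eq:lm_E[W|Yt]_2}. Using the definition \eqref{eq:b_def}, the inner conditional expectation becomes
\begin{align}
\E[W|Y^{t-1}=y^{t-1}, Y_t=y_t] = \frac{1}{P_Y^*(y_t)}\sum_{i=1}^k P_{Y|X}(y_t|i)\,{\boldsymbol b}_{y^{t-1}}(i) . \nonumber
\end{align}
Squaring it gives a double sum over $i,j$ of $P_{Y|X}(y_t|i)P_{Y|X}(y_t|j){\boldsymbol b}_{y^{t-1}}(i){\boldsymbol b}_{y^{t-1}}(j)/P_Y^*(y_t)^2$. Multiplying by $P_Y^*(y_t)$ cancels one factor of $P_Y^*(y_t)$. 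Exchanging the finite sums over $y_t$ and $(i,j)$ then produces $\sum_{i,j}{\boldsymbol b}_{y^{t-1}}(i){\boldsymbol K}(i,j){\boldsymbol b}_{y^{t-1}}(j)$, with ${\boldsymbol K}$ as in \eqref{eq:K_def}. This is exactly ${\boldsymbol b}_{y^{t-1}}^\top{\boldsymbol K}{\boldsymbol b}_{y^{t-1}}$.

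The argument is essentially algebraic, and the only delicate points are well-definedness. First, outputs with $P_Y^*(y_t)=0$ must be excluded. They contribute zero to the expectation and can be dropped from the sum defining ${\boldsymbol K}$, since each numerator term $P_{Y|X}(y_t|i)P_X^*(i)$ then vanishes. Second, inputs with $P_X^*(i)=0$ make the conditional expectation $\E[W|Y^{t-1}=y^{t-1},X_t=i]$ undefined. I would note that in this case ${\boldsymbol b}_{y^{t-1}}(i)=0$ under any convention, so the formula is unaffected. I expect no real obstacle beyond correctly invoking the independence of $Y_t$ from $Y^{t-1}$, which is what makes the weights $P_Y^*$ and the coefficients in \eqref{eq:lm_E[W|Yt]_2} independent of $y^{t-1}$.
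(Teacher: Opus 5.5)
Your proposal is correct and is essentially the paper's own proof. You write the expectation as a sum over $Y_t\sim P_Y^*$, justified by the independence from $Y^{t-1}$ given by Theorem~\ref{th:nece_info_max}, substitute \eqref{eq:lm_E[W|Yt]_2}, square, and exchange the finite sums to read off $\boldsymbol{b}_{y^{t-1}}^\top\boldsymbol{K}\boldsymbol{b}_{y^{t-1}}$. Your well-definedness remarks go beyond the paper, which ignores these edge cases. One sharpening: the optimality conditions for $P_X^*$ give $D(P_{Y|X=x}\|P_Y^*)\le C<\infty$ for every input $x$, so any output with $P_Y^*(y_t)=0$ has $P_{Y|X}(y_t|i)=0$ for all $i$. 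Such outputs are therefore just all-zero columns of the channel, and dropping them from \eqref{eq:K_def} changes no entry of $\boldsymbol{K}$ under the convention $0/0=0$, rather than modifying its definition.
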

\begin{proof}
We have
\begin{align}
\E[\E[W|Y^{t-1}=y^{t-1}, Y_t]^2] 
=& \sum_{y_t=1}^l P_{Y}^*(y_t) \E[W|Y^{t-1}=y^{t-1}, Y_t = y_t]^2 \\
=  
& \sum_{i=1}^k P_{X}^*(i) \E[W|Y^{t-1}=y^{t-1}, X_t=i] \cdot \nonumber \\
& \quad \sum_{j=1}^k P_{X}^*(j) \E[W|Y^{t-1}=y^{t-1}, X_t=j]  \cdot \nonumber \\
& \qquad \sum_{y_t=1}^l \frac{ P_{Y|X}(y_t|i) P_{Y|X}(y_t|j)}{P_{Y}^*(y_t)} \label{eq:lm:bKb_1} \\
= &
{\boldsymbol b}_{y^{t-1}}^\top {\boldsymbol K} {\boldsymbol b}_{y^{t-1}} 
\end{align}
where \eqref{eq:lm:bKb_1} follows from \eqref{eq:lm_E[W|Yt]_2} in Lemma~\ref{lm:E[W|Y]} and rearranging the order of summations.
\end{proof}
The matrix $\boldsymbol{K}$ in the quadratic form \eqref{eq:bKb_form} solely depends on the channel $P_{Y|X}$. Under certain conditions of $P_{Y|X}$, this quadratic form is Schur convex. Lemma~\ref{lm:E[W|Y]^2} thus reveals a route to use majorization to maximize $\E[\E[W|Y^{t-1}=y^{t-1}, Y_t]^2]$, as stated in the following lemma.
\begin{lemma}\label{lm:schur_cvx}
Under the assumption that the channel $P_{Y|X}$ results in a matrix $\boldsymbol{K}$ in \eqref{eq:K_def} of the form $\boldsymbol{K}=\alpha \boldsymbol{I} + \beta \boldsymbol{1}\boldsymbol{1}^\top$, if a $k$-length vector ${\boldsymbol b}$ majorizes ${\boldsymbol b}'$, then ${\boldsymbol b}^\top \boldsymbol{K} {\boldsymbol b} \ge {\boldsymbol b}'^\top \boldsymbol{K} {\boldsymbol b}'$.
\end{lemma}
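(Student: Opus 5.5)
The plan is to expand the quadratic form using the assumed structure of $\boldsymbol K$, reduce the claim to a comparison of Euclidean norms, and then handle the sign of $\alpha$ separately. With $\boldsymbol K=\alpha\boldsymbol I+\beta\boldsymbol 1\boldsymbol 1^\top$ we have, for any length-$k$ vector $\boldsymbol b$,
\begin{align}
\boldsymbol b^\top \boldsymbol K\boldsymbol b = \alpha \|\boldsymbol b\|_2^2 + \beta\,(\boldsymbol 1^\top \boldsymbol b)^2 .
\end{align}
By the definition of majorization, $\boldsymbol b$ majorizing $\boldsymbol b'$ includes the equality $\boldsymbol 1^\top\boldsymbol b=\boldsymbol 1^\top\boldsymbol b'$. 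So the $\beta$-terms coincide, whatever the sign of $\beta$. It therefore suffices to show two things: $\alpha\ge 0$, and $\|\boldsymbol b\|_2^2\ge\|\boldsymbol b'\|_2^2$.

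For $\alpha\ge 0$, I will use the fact that $\boldsymbol K$ comes from a channel through \eqref{eq:K_def}, rather than treating $\alpha,\beta$ as arbitrary. Let $\boldsymbol M$ be the $k\times l$ matrix with $\boldsymbol M(i,y)=P_{Y|X}(y|i)$, and let $\boldsymbol D=\mathrm{diag}(1/P_Y^*(y))$. Then $\boldsymbol K=\boldsymbol M\boldsymbol D\boldsymbol M^\top$, which is positive semidefinite. Outputs with $P_Y^*(y)=0$ can be discarded, since they have zero probability under every input in the support of $P_X^*$.

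Next I compute the spectrum of $\alpha\boldsymbol I+\beta\boldsymbol 1\boldsymbol 1^\top$. It has eigenvalue $\alpha+k\beta$ on $\boldsymbol 1$. It has eigenvalue $\alpha$ on the $(k-1)$-dimensional subspace orthogonal to $\boldsymbol 1$, which is nontrivial when $k\ge2$. Positive semidefiniteness then forces $\alpha\ge0$. The trivial case $k=1$ can be disposed of directly, since there $\boldsymbol b=\boldsymbol b'$.

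For the norm inequality, I will invoke the Hardy--Littlewood--P\'olya characterization: $\boldsymbol b$ majorizes $\boldsymbol b'$ if and only if $\boldsymbol b'=\boldsymbol P\boldsymbol b$ for some doubly stochastic matrix $\boldsymbol P$. Then each $b'_i=\sum_j \boldsymbol P(i,j)b_j$ is a convex combination of the $b_j$, so Jensen's inequality gives $b_i'^2\le\sum_j\boldsymbol P(i,j)b_j^2$. Summing over $i$ and using that the column sums of $\boldsymbol P$ equal one yields $\|\boldsymbol b'\|_2^2\le\|\boldsymbol b\|_2^2$. Equivalently, this step is the standard fact that $\sum_i g(b_i)$ is Schur convex for convex $g$. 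Multiplying by $\alpha\ge0$ and adding back the equal $\beta$-terms completes the argument.

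The only step needing care is $\alpha\ge0$. The lemma's hypothesis alone does not fix the sign of $\alpha$, so the proof must exploit the factorization of $\boldsymbol K$ coming from \eqref{eq:K_def}; without it, a negative $\alpha$ would reverse the inequality.
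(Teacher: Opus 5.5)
Your proof is correct, and it differs from the paper's in a way worth noting.

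The paper's argument is a single step: with $\alpha\ge 0$, the form $\boldsymbol b^\top\boldsymbol K\boldsymbol b$ is convex and permutation-invariant, hence Schur convex by the standard fact that symmetric convex functions are Schur convex. You instead split off $\beta(\boldsymbol 1^\top\boldsymbol b)^2$, which is constant on vectors with a common sum whatever the sign of $\beta$. You then prove $\|\boldsymbol b\|_2^2\ge\|\boldsymbol b'\|_2^2$ directly via Hardy--Littlewood--P\'olya and Jensen. This is self-contained, and it shows that only $\alpha\ge 0$ matters, not convexity of the whole form.

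The more substantive difference is how $\alpha\ge 0$ is justified. The paper says it holds ``as $\boldsymbol K$ is entrywise nonnegative.'' But entrywise nonnegativity of $\alpha\boldsymbol I+\beta\boldsymbol 1\boldsymbol 1^\top$ only yields $\beta\ge 0$ and $\alpha+\beta\ge 0$; for example, diagonal entries $1$ and off-diagonal entries $2$ give $\alpha=-1$. So that stated reason does not establish the claim. Your factorization $\boldsymbol K=\boldsymbol M\boldsymbol D\boldsymbol M^\top\succeq 0$ is the property that actually forces the diagonal to dominate the off-diagonal. Equivalently, Cauchy--Schwarz in the inner product weighted by $1/P_Y^*$ gives $\boldsymbol K(i,j)\le\sqrt{\boldsymbol K(i,i)\boldsymbol K(j,j)}$. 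Your route therefore supplies the justification the paper's proof lacks.

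One small precision: to drop outputs with $P_Y^*(y)=0$ from \eqref{eq:K_def}, you need $P_{Y|X}(y|i)=0$ for every input $i$, not only for $i$ in the support of $P_X^*$. This follows from the finiteness of the diagonal entries of $\boldsymbol K$. It also follows directly from $P_X^*$ being uniform in the setting of Theorem~\ref{th:info_max_dist_min}.
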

\begin{proof}
When $\boldsymbol{K}=\alpha \boldsymbol{I} + \beta \boldsymbol{1}\boldsymbol{1}^\top$ with $\alpha\ge 0$, it has equal diagonal elements and equal off-diagonal elements, resulting ${\boldsymbol b}^\top \boldsymbol{K} {\boldsymbol b}$ as a convex function of $\boldsymbol{b}$ and invariant to the order of the elements in $\boldsymbol{b}$, hence it is Schur convex in $\boldsymbol{b}$ \cite[Example 3.8]{SchurConvex07}. 
The condition $\alpha\ge 0$ can always be satisfied as $\boldsymbol{K}$ is entrywise nonnegative.
The claim follows from the property of Schur convex functions.
\end{proof}

Next, we optimize over $\boldsymbol{b}_{y^{t-1}}$ through majorization. The route is similar to the optimality proof technique developed in \cite{quantization_xu}. 
Under the assumption that $W$ is a continuous random variable taking values in $\sW\subset\R$, whenever $\PP[Y^{t-1}=y^{t-1}]>0$, the conditional CDF $F_{W|Y^{t-1}=y^{t-1}}$ is continuous and increasing on its support hence it has a continuous and increasing inverse function
\begin{align}
Q_{W|y^{t-1}}(u) \deq F_{W|Y^{t-1}=y^{t-1}}^{-1}(u), \quad u\in[0,1] .
\end{align}
With a random variable $U$ uniformly distributed over $[0,1]$, the random variable $Q_{W|y^{t-1}}(U)$ has the distribution $P_{W|Y^{t-1}=y^{t-1}}$ for all $y^{t-1}$. 
At the $t$th step, given $Y^{t-1}=y^{t-1}$, a $Y^{t-1}$-dependent encoding function $\phi_{t,Y^{t-1}}:\,\sW \rightarrow \sX$ defined in \eqref{eq:Y-depend_phi} induces a $y^{t-1}$-dependent labeled partition of $[0,1]$, denoted as $(\sS_{y^{t-1}}(1), \ldots, \sS_{y^{t-1}}(k))$ with
\begin{align}
\sS_{y^{t-1}}(i) \deq \{u\in[0,1]: \phi_{t,y^{t-1}}(Q_{W|y^{t-1}}(u))=i\} .
\end{align}
We have
\begin{align}
& \E[W|Y^{t-1} = y^{t-1}, X_t = i] \nonumber \\
= & \E[Q_{W|y^{t-1}}(U)|\phi_{t,y^{t-1}}(Q_{W|y^{t-1}}(U)) = i] \\
=& \frac{1}{P_X^*(i)} \int_{\sS_{y^{t-1}}(i)} Q_{W|y^{t-1}}(u) \rd u 
\end{align}
hence an equivalently $(\sS_{y^{t-1}}(1), \ldots, \sS_{y^{t-1}}(k))$-induced 
\begin{align}
\boldsymbol{b}_{y^{t-1}}(i) = 
\int_{\sS_{y^{t-1}}(i)} Q_{W|y^{t-1}}(u) \rd u \quad i=1,\ldots,k .
\end{align}
The following lemma shows that when $P_X^*$ is uniform, we can find the majorizing $\boldsymbol{b}_{y^{t-1}}$ by choosing $(\sS_{y^{t-1}}(1), \ldots, \sS_{y^{t-1}}(k))$ as contiguous intervals.
\begin{lemma}\label{lm:major_b}
When $P_X^*$ is uniform over $\{1,\ldots,k\}$,
among the category of labeled partitions $(\sS_{y^{t-1}}(1),\ldots,\sS_{y^{t-1}}(k))$ of $[0,1]$ resulting in 1) $P_U(\sS_{y^{t-1}}(i))=P_X^*(i)$, and 2) an order of $\boldsymbol{b}_{y^{t-1}}(\sigma(1)) \le \ldots \le \boldsymbol{b}_{y^{t-1}}(\sigma(k))$ specified by a permutation $\sigma$ of $\{1,\ldots,k\}$, the one defined by equal-length contiguous intervals
\begin{align}
    \sS^*(\sigma(i)) \deq [q_{i-1}, q_i)  \quad\text{with $q_i \deq q_{i-1} + P_X^*(\sigma(i))$} \label{eq:def_S*_unif_PX}
\end{align}
for $i=1,\ldots,k$ and $q_0 \deq 0$ induces a vector $\boldsymbol{b}^*_{y^{t-1}}$ satisfying 
\begin{align}
\sum_{i=1}^j \boldsymbol{b}_{y^{t-1}} ^*(\sigma(i))
    \le
    \sum_{i=1}^j \boldsymbol{b}_{y^{t-1}}(\sigma(i)) \quad j=1,\ldots, k-1 \label{eq:mj_b_1}
\end{align}
and 
\begin{align}
\sum_{i=1}^k \boldsymbol{b}_{y^{t-1}}^*(\sigma(i))
    =
    \sum_{i=1}^k \boldsymbol{b}_{y^{t-1}}(\sigma(i)) \label{eq:mj_b_2}
\end{align}
for $\boldsymbol{b}_{y^{t-1}}$ induced by any other labeled partitions of $[0,1]$ in this category.
\end{lemma}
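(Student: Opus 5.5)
The plan is to reduce the claim to a rearrangement argument for the increasing function $Q \deq Q_{W|y^{t-1}}$ on $[0,1]$. Write $\boldsymbol{b}(i) = \int_{\sS(i)} Q(u)\,\rd u$.

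First, \eqref{eq:mj_b_2} is immediate. Since $(\sS(1),\ldots,\sS(k))$ is a partition of $[0,1]$ up to null sets, the total is
\begin{align}
\sum_{i=1}^k \boldsymbol{b}(\sigma(i)) = \int_0^1 Q(u)\,\rd u = \E[W|Y^{t-1}=y^{t-1}]
\end{align}
for every admissible partition, including $\sS^*$.

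For \eqref{eq:mj_b_1}, fix $j\le k-1$. Set $A_j \deq \bigcup_{i=1}^j \sS(\sigma(i))$. Because $P_X^*$ is uniform, $P_U(A_j) = j/k = q_j$. On the other side, $\bigcup_{i=1}^j \sS^*(\sigma(i)) = [0,q_j)$. So the claim becomes
\begin{align}
\int_0^{q_j} Q(u)\,\rd u \le \int_{A_j} Q(u)\,\rd u
\end{align}
for every measurable $A_j\subset[0,1]$ of Lebesgue measure $q_j$.

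This is the standard bathtub argument. Write $A_j = (A_j\cap[0,q_j)) \cup (A_j\cap[q_j,1])$. Then
\begin{align}
\int_{A_j} Q - \int_{[0,q_j)} Q = \int_{A_j\cap[q_j,1]} Q - \int_{[0,q_j)\setminus A_j} Q .
\end{align}
The two sets on the right have equal measure, since each equals $q_j - |A_j\cap[0,q_j)|$. Since $Q$ is increasing, $Q\ge Q(q_j)$ on the first set and $Q\le Q(q_j)$ on the second. Hence the difference is at least $Q(q_j)$ times zero, i.e., nonnegative. Taking left/right limits suffices if $Q$ is unbounded near the endpoints; we also need $\E[|W| \mid Y^{t-1}=y^{t-1}]<\infty$ so the integrals are finite.

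The main subtlety is not this inequality but the role of the ordering constraint 2). The argument above never uses it for the competing partition. I will also verify that $\sS^*$ itself belongs to the category, i.e. that $\boldsymbol{b}^*(\sigma(1))\le\cdots\le\boldsymbol{b}^*(\sigma(k))$. This holds because the intervals $[q_{i-1},q_i)$ have equal length and move rightward with $i$, and $Q$ is increasing, so the integrals over successive intervals are nondecreasing.

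Uniformity of $P_X^*$ is what makes $P_U(A_j)$ equal $q_j$ regardless of which labels occupy the first $j$ positions of $\sigma$. Without it the comparison sets would have mismatched measures. Combining both parts, $\boldsymbol{b}^*$ is majorized by every $\boldsymbol{b}$ in the category, which is the content of \eqref{eq:mj_b_1}–\eqref{eq:mj_b_2}.
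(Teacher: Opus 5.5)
Your proof is correct and is essentially the paper's argument. Both reduce the partial-sum inequalities to comparing $\int Q$ over the union of the first $j$ competing cells with $\int_{[0,q_j)} Q$, split off the two equal-measure set differences, and use monotonicity of $Q$ (you pivot at $Q(q_j)$ where the paper uses $\inf$ and $\sup$), and both check that $\sS^*$ lies in the category via equal interval lengths.

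One small inaccuracy in your closing remark: $P_U(A_j)=q_j$ holds by the definition $q_i = q_{i-1}+P_X^*(\sigma(i))$ even for non-uniform $P_X^*$. Uniformity is really needed only for the ordering $\boldsymbol{b}^*(\sigma(1))\le\cdots\le\boldsymbol{b}^*(\sigma(k))$, where you do invoke it correctly, and for the later $\sigma$-independence of the intervals.
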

\begin{proof}
First, as $P_X^*$ is uniform and $Q_{W|y^{t-1}}(u)$ is increasing in $u$, $\boldsymbol{b}_{y^{t-1}}^*$ induced by the construction of $(\sS^*(1),\ldots, \sS^*(k))$ in \eqref{eq:def_S*_unif_PX} satisfies the order specified by $\sigma$.
For any other $(\sS_1,\ldots,\sS_k)$ in this category of labeled partitions, let $\sA_j \deq \cup_{i=1}^j \sS^*({\sigma(i)})$ and ${\mathsf B}_j \deq \cup_{i=1}^j \sS(\sigma(i))$. Then $P_U(\sA_j) = P_U(\mathsf B_j)$, hence
\begin{align}
    P_U(\sA_j\setminus\mathsf B_j) = P_U(\mathsf B_j \setminus \sA_j) . \label{eq:mj_b_B-A=A-B}
\end{align}
It follows that
\begin{align}
& \sum_{i=1}^j \boldsymbol{b}_{y^{t-1}} (\sigma(i)) 
-
\sum_{i=1}^j \boldsymbol{b}_{y^{t-1}} ^*(\sigma(i)) \nonumber \\
=& 
\int_{\mathsf B_j} Q_{W|y^{t-1}}(u) \rd u 
-
\int_{\sA_j} Q_{W|y^{t-1}}(u) \rd u \\
=& \int_{\mathsf B_j\setminus\sA_j} Q_{W|y^{t-1}}(u) \rd u 
-
\int_{\sA_j\setminus\mathsf B_j} Q_{W|y^{t-1}}(u) \rd u \\
\ge& P_U(\mathsf B_j \setminus \sA_j) \Big( \inf_{u\in \mathsf B_j \setminus \sA_j} Q_{W|y^{t-1}}(u) - \sup_{u\in \mathsf A_j \setminus \mathsf B_j} Q_{W|y^{t-1}}(u)\Big) \label{eq:major_b_B-A} \\
\ge & 0  \label{eq:major_b_B>A}
\end{align}
where \eqref{eq:major_b_B-A} follows from \eqref{eq:mj_b_B-A=A-B} and the fact that $Q_{W|y^{t-1}}(u)$ is increasing in $u$, and \eqref{eq:major_b_B>A} follows from the fact that $Q_{W|y^{t-1}}(u)$ is increasing in $u$ and 
\begin{align}
    \sup( \sA_j\setminus\mathsf B_j )
\le 
    \inf( \mathsf B_j\setminus\sA_j )
\end{align}
by the construction of $\sS^*({\sigma(i))}$ for $i=1,\ldots,j$ in \eqref{eq:def_S*_unif_PX}.
This proves \eqref{eq:mj_b_1}.
The proof of \eqref{eq:mj_b_2} is straightforward as
$    \sum_{i=1}^k \boldsymbol{b}_{y^{t-1}} ^*
    =
    \sum_{i=1}^k \boldsymbol{b}_{y^{t-1}} 
    = \int_{[0,1]} Q_{W|y^{t-1}}(u) \rd u  = \E[W|Y^{t-1}=y^{t-1}]$.
\end{proof}
It is worthwhile to notice that the labeled partition constructed in \eqref{eq:def_S*_unif_PX} as contiguous intervals does not depend on $y^{t-1}$; moreover, for different $\sigma$, the contiguous intervals are the same and only their labels differ.
With Lemma~\ref{lm:E[W|Y]}, \ref{lm:E[W|Y]^2}, \ref{lm:schur_cvx} and \ref{lm:major_b}, we arrive at the solution to the information maximization-distortion minimization problem.
\begin{theorem}\label{th:info_max_dist_min}
When a discrete memoryless channel $P_{Y|X}$ satisfies the following three conditions:
\begin{enumerate}
    \item the channel is input-identifiable;
    \item the capacity-achieving input distribution $P_X^*$ is uniform over $\{1,\ldots,k\}$;
    \item the matrix $\boldsymbol{K}$ defined in \eqref{eq:K_def} induced by the channel has equal diagonal elements and equal off-diagonal elements;
\end{enumerate}
at the $t$th step, for $t=1,\ldots,n$, a $Y^{t-1}$-dependent encoding function taking the form
\begin{align}
    X_t = \sigma(F_{\sigma^{-1}(X)}^{*-1}(F_{W|Y^{t-1}}(W|Y^{t-1}))) , \label{eq:enc_info_max_d_min}
\end{align}
where $\sigma$ is an arbitrary permutation of $\{1,\ldots,k\}$ and $F_{\sigma^{-1}(X)}^{*-1}$ is the generalized inverse CDF of $\sigma^{-1}(X)$ with $X\sim P_X^*$, achieves the minimum in \eqref{eq:opt_mmse}, that is, it solves the information maximization-distortion minimization problem. 
Moreover, the form of \eqref{eq:enc_info_max_d_min} includes all encoding functions that achieve the minimum in \eqref{eq:opt_mmse} when the source distribution is supported on an interval and the channel satisfies the above conditions.
\end{theorem}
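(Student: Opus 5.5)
We first check that the encoder \eqref{eq:enc_info_max_d_min} belongs to $\mathsf\Phi$, and then show it maximizes \eqref{eq:opt_EW|Y]^2} at every step.

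\textbf{Membership in $\mathsf\Phi$.} Given $Y^{t-1}=y^{t-1}$, the variable $U_t\deq F_{W|Y^{t-1}}(W|Y^{t-1})$ is uniform on $[0,1]$, since $F_{W|Y^{t-1}=y^{t-1}}$ is continuous. Hence $F^{*-1}_{\sigma^{-1}(X)}(U_t)$ has the law of $\sigma^{-1}(X)$ with $X\sim P_X^*$. Applying $\sigma$ then gives $X_t\sim P_X^*$, conditionally on every $y^{t-1}$.

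This yields three facts for every $t$:
\begin{itemize}
\item $X_t$ is a deterministic function of $(W,Y^{t-1})$;
\item $P_{X_t}=P_X^*$;
\item $X_t$ is independent of $Y^{t-1}$.
\end{itemize}
Theorem~\ref{th:suff_info_max} then gives $I(W;Y^n)=nC$, so the encoder lies in $\mathsf\Phi$.

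In terms of the partition of $[0,1]$ induced via $Q_{W|y^{t-1}}$, this encoder is exactly $\sS^*(\sigma(i))=[q_{i-1},q_i)$ with $q_i=i/k$. The reason is that $F^{*-1}_{\sigma^{-1}(X)}(u)=i$ precisely on $[(i-1)/k,i/k)$, because $P_X^*$ is uniform.

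\textbf{Optimality.} Any competing $\phi_{t,y^{t-1}}$ must come from an encoder sequence in $\mathsf\Phi$. Since the channel is input-identifiable, Theorem~\ref{th:nece_info_max} forces this competitor to satisfy the same three conditions. In particular, its induced partition has $P_U(\sS_{y^{t-1}}(i))=P_X^*(i)=1/k$ for every $y^{t-1}$ with positive probability.

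By Lemma~\ref{lm:E[W|Y]}, it suffices to maximize $\E[\E[W|Y^{t-1}=y^{t-1},Y_t]^2]$ separately for each $y^{t-1}$. By Lemma~\ref{lm:E[W|Y]^2}, this quantity equals $\boldsymbol b_{y^{t-1}}^\top\boldsymbol K\boldsymbol b_{y^{t-1}}$.

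Take any admissible partition, and let $\sigma$ be the permutation that sorts its $\boldsymbol b_{y^{t-1}}$ in increasing order. Lemma~\ref{lm:major_b} shows that the contiguous-interval partition labeled by this $\sigma$ yields $\boldsymbol b^*$ with the same total and smaller partial sums of the smallest entries. This is exactly the statement that $\boldsymbol b^*$ majorizes $\boldsymbol b$.

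Condition 3 says $\boldsymbol K=\alpha\boldsymbol I+\beta\boldsymbol 1\boldsymbol 1^\top$. Lemma~\ref{lm:schur_cvx} therefore gives $\boldsymbol b^{*\top}\boldsymbol K\boldsymbol b^*\ge\boldsymbol b^\top\boldsymbol K\boldsymbol b$. The quadratic form is invariant under relabeling, so every $\sigma$ gives the same value. Hence every encoder of the form \eqref{eq:enc_info_max_d_min} attains the maximum, which is the claimed minimum of \eqref{eq:opt_mmse}.

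\textbf{Converse (main obstacle).} We must show that any optimal partition coincides with some contiguous-interval partition up to null sets, when the support is an interval. On an interval support, $Q_{W|y^{t-1}}$ is strictly increasing. The bound \eqref{eq:major_b_B-A} is then strict whenever $P_U(\mathsf B_j\setminus\sA_j)>0$, so $\boldsymbol b^*\ne\boldsymbol b$ unless the partitions agree almost everywhere.

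Strict majorization alone does not give strict inequality of the quadratic form. Expanding, we get $\boldsymbol b^\top\boldsymbol K\boldsymbol b=\alpha\|\boldsymbol b\|^2+\beta(\boldsymbol 1^\top\boldsymbol b)^2$. The sum $\boldsymbol 1^\top\boldsymbol b$ is fixed, and $\|\cdot\|^2$ is strictly Schur convex. So strictness requires $\alpha>0$. This should follow from input-identifiability, because $\alpha=0$ would make $\boldsymbol K$ rank one, which forces all rows $P_{Y|X=i}$ to be equal.

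Combining the strict inequality for non-contiguous partitions with this strict Schur convexity, every optimal encoder must have the form \eqref{eq:enc_info_max_d_min} almost surely. This uniqueness step is the delicate part of the argument.
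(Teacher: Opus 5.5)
Your proposal is correct and follows the paper's route: you establish membership in $\mathsf\Phi$ via Theorem~\ref{th:suff_info_max}, then combine Lemmas~\ref{lm:E[W|Y]}--\ref{lm:major_b} (per-$y^{t-1}$ decomposition, quadratic form, Schur convexity, majorization by contiguous intervals) with the fact that the contiguous intervals do not depend on the labeling $\sigma$. Your uniqueness step is more careful than the paper's, which only cites Schur convexity. You correctly identify the two ingredients needed for strictness: $\alpha>0$, which holds because $\alpha=\tfrac12\sum_{y}(P_{Y|X}(y|i)-P_{Y|X}(y|j))^2/P_Y^*(y)$ for any $i\neq j$, and strict majorization, which holds because $Q_{W|y^{t-1}}$ is strictly increasing whenever the conditional CDF is continuous, even without the interval-support assumption.
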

\begin{proof}
From the assumptions of the channel and  Lemma~\ref{lm:E[W|Y]}, \ref{lm:E[W|Y]^2}, \ref{lm:schur_cvx} and \ref{lm:major_b}, we know that at the $t$th step, for a particular order specified by $\sigma$ of the elements in $\boldsymbol{b}_{Y^{t-1}}$ as defined in \eqref{eq:b_def}, $\E[\E[W|Y^{t}]^2]$ can be maximized by a $Y^{t-1}$-dependent encoding function taking the form
\begin{align}
    \phi_{t,Y^{t-1}}(w) = \sigma(i) \quad\text{if } w\in Q_{W|Y^{t-1}}(\sS^*(\sigma(i))) . \label{eq:phi_info_max_d_min}
\end{align}
As 
the labeled partition $(\sS^*(1),\ldots,\sS^*(k))$ constructed in \eqref{eq:def_S*_unif_PX} for different $\sigma$ result in the same contiguous intervals, the above $\phi_{t,Y^{t-1}}(w)$ achieves the global maximum of $\E[\E[W|Y^{t}]^2]$.
To see why this encoding function takes the particular form of \eqref{eq:enc_info_max_d_min}, note that
\begin{align}
w\in Q_{W|Y^{t-1}}(\sS^*(\sigma(i)))
\Leftrightarrow F_{W|Y^{t-1}}(w|Y^{t-1}) \in 
\sS^*(\sigma(i)) , \nonumber
\end{align}
$F_{W|Y^{t-1}}(W|Y^{t-1})$ has the same distribution as $U$, and the inverse CDF of the random variable $\sigma^{-1}(X)$ with $X\sim P_X^*$ maps the contiguous intervals
$(\sS^*({\sigma(1)}), \ldots, \sS^*(\sigma(k)))$ of $[0,1]$ to the indices $(1,\ldots,k)$.
With these observations, we know the encoding function in \eqref{eq:phi_info_max_d_min} is equivalent to \eqref{eq:enc_info_max_d_min}.
The claim that this form of encoding functions uniquely maximizes $\E[\E[W|Y^{t}]^2]$ is due to the Schur convexity of $ {\boldsymbol b}_{y^{t-1}}^\top {\boldsymbol K} {\boldsymbol b}_{y^{t-1}} $ in ${\boldsymbol b}_{y^{t-1}}$ and the form of encoding function encompasses all orders of ${\boldsymbol b}_{y^{t-1}}$.

It remains to show that the sequence of encoding functions of \eqref{eq:enc_info_max_d_min} belongs to $\mathsf\Phi$, i.e.\ it solves the information maximization problem. 
This can be seen that this encoder satisfies the three conditions in Theorem~\ref{th:suff_info_max} sufficient for information maximization: $X_t$ is determined by $(W,Y^{t-1})$, $X_t\sim P_X^*$, and $F_{W|Y^{t-1}}(W|Y^{t-1})$ is independent of $Y^{t-1}$ hence is $X_t$.
This completes the proof.
\end{proof}

It can be verified that $k$-ary symmetric channels and $k$-ary erasure channels satisfy all three conditions in Theorem~\ref{th:info_max_dist_min}.
Finally, with Theorem~\ref{th:info_max_dist_min}, we can see that under these channels, the posterior matching scheme
\begin{align}
    X_t = F_X^{*-1}(F_{W|Y^{t-1}}(W|Y^{t-1})) \label{eq:posterior_matching}
\end{align}
is sufficient to achieve information maximization and distortion minimization; it is essentially necessary as well, as all other encoders that solve this problem are merely relabeling the output of the posterior matching scheme.

{
    \small
    \bibliography{main}
}

\end{document}